\newcommand{\Mod}[1]{\ \text{mod}\ #1}
\newcommand{\cswap}{\emph{CSWAP }}
\newcommand{\ccswap}{\emph{CCSWAP }}
\newcommand{\cccswap}{\emph{CCCSWAP }}
\newcommand{\ckswap}{\emph{C\textsuperscript{k}SWAP }}
\newcommand{\ckkswap}{\emph{C\textsuperscript{k-1}SWAP }}
\newcommand{\cmmswap}{\emph{C\textsuperscript{m-3}SWAP }}
\newcommand{\ciswap}{\emph{C\textsuperscript{i}SWAP }}
\newcommand{\ciiswap}{\emph{C\textsuperscript{i-1}SWAP }}
\newcommand{\cknot}{\emph{C\textsuperscript{k}NOT }}
\newcommand{\cfnot}{\emph{C\textsuperscript{4}NOT }}
\newcommand{\cinot}{\emph{C\textsuperscript{i}NOT }}
\newcommand{\ciinot}{\emph{C\textsuperscript{i-1}NOT }}
\newcommand{\cnnot}{\emph{C\textsuperscript{n}NOT }}
\newcommand{\cnnnot}{\emph{C\textsuperscript{n-1}NOT }}
\newcommand{\T}{$t'_1$ }
\newcommand{\TT}{$t'_2$ }
\begin{document}

\newtheorem{Def}{Definition}
\newtheorem{Lem}{Lemma}
\newtheorem{Cla}{Claim}

%
%
%
%
%
%
%
%
%
%
%

\title{Reversible Logic Synthesis with Minimal Usage of Ancilla Bits}

\author{Siyao Xu}
\department{Department of Electrical Engineering and Computer Science}

\degree{Master of Engineering in Electrical Engineering and Computer Science}

\degreemonth{June}
\degreeyear{2015}
\thesisdate{May 20, 2015}


\supervisor{Prof. Scott Aaronson}{Associate Professor}

\chairman{Prof. Albert R. Meyer}{Chairman, Masters of Engineering Thesis Committee}

%
%
%
%
%
%
%
%
%
%
%
%
%
%
%
%
%
%
%
%
%
\maketitle



\cleardoublepage
\setcounter{savepage}{\thepage}
\begin{abstractpage}
%
%
%

Reversible logic has attracted much research interest over the last few decades, especially due to its application in quantum computing. In the construction of reversible gates from basic gates, ancilla bits are commonly used to remove restrictions on the type of gates that a certain set of basic gates generates. With unlimited ancilla bits, many gates (such as Toffoli and Fredkin) become universal reversible gates. However, ancilla bits can be expensive to implement, thus making the problem of minimizing necessary ancilla bits a practical topic.

This thesis explores the problem of reversible logic synthesis using a single base gate and a few ancilla bits. Two base gates are discussed: a variation of the 3-bit Toffoli gate and the original 3-bit Fredkin gate. There are three main results associated with these two gates: i) the variated Toffoli gate can generate all $n$-bit reversible gates using 1 ancilla bit, ii) the variated Toffoli can generate all $n$-bit reversible gates that are even permutations using no ancilla bit, iii) the Fredkin gate can generate all $n$-bit conservative reversible gates using 1 ancilla bit. Prior to this paper, the best known result for general universality requires three basic gates, and the best known result for conservative universality needs 5 ancilla bits.

The second result is trivially optimal. For the first and the third result, we explicitly prove their optimality: the variated Toffoli cannot generate all $n$-bit reversible gates without using any extra input lines, and the Fredkin gate cannot generate all $n$-bit conservative reversible gates without using extra input lines. We also explore a stronger version of the second converse by introducing a new concept called borrowed bits, and prove that the Fredkin gate cannot generate all $n$-bit conservative reversible gates without ancilla bits, even with an unlimited number of borrowed bits.

\end{abstractpage}


\singlespacing

\section*{Acknowledgments}

First of all, I would like to express my sincere gratitude towards Prof. Scott Aaronson, my M.Eng. supervisor, who guided me in my research and provided me with a lot of useful resources and advice. I would also like to thank Prof. Dana Moshkovitz, my academic advisor, for helping me find this project that I love.

I would like to thank Qian Yu, my fianc\'{e}, for teaching me the basics of quantum computing, and inspiring me on important techniques used in this thesis. Most importantly, Qian provided me with mental support without reservation, which helped me through difficult times during my research.

In addition, I thank all my friends for making my last year at MIT fun and meaningful.

Last but not least, my greatest gratitude goes to my parents for always respecting my decisions and providing me with unwavering support. 

\doublespacing

\cleardoublepage


\pagestyle{plain}

\tableofcontents
\newpage
\listoffigures

\chapter{Introduction}

\section{Reversible Gates}

\subsection{Definition and Representation}
In logic circuits, a reversible gate is defined as a gate where no information is lost in the process of computing the output from the input. Strictly speaking, an $n$-bit reversible gate takes in $n$ bits of input and returns $n$ bits of output, and when given the output, we can unambiguously reconstruct the input from that information. For example, the \emph{NOT} gate, which implements logical negation, is reversible because the input can be reconstructed by inverting the output. The \emph{AND} gate, which implements logical conjunction, is not reversible because 4 different inputs produce only 2 different outputs; specifically the output 0 has 3 different inputs mapped to it.

There are many ways to represent reversible gates. In this paper, we use mainly the following two forms of representation: \emph{permutation} and \emph{sum of products}.

\textbf{Permutation:} We consider all the possible inputs to an $n$-bit reversible gate as a set of binary strings. This set $I_n = \{0, 1\}^n$ contains $2^n$ elements. Since the output of an $n$-bit reversible gate also consist of $n$ bits, the set of all possible outputs is also $I_n$.

In order to allow for unambiguously reconstructing the input, the inputs must be mapped to the outputs one-on-one. In other words, an $n$-bit reversible gate can be represented as a permutation of the set $I_n$. As a result, there are $(2^n)!$ $n$-bit reversible gates.

Permutations have all kinds of properties that may be useful for classifying different reversible gates. In particular, we will be using the parity and the cycle representation of permutations in this paper to prove several results about reversible gate constructions.

\textbf{Sum of Products:} A gate (not necessarily reversible) can be represented as an arithmetic expression on the field of $\mathbb{F}_2$. An \emph{AND} gate can be represented as $a\cdot b$ where $a$ and $b$ are the two input bits. A \emph{XOR} gate can be represented as $a + b$. A \emph{NOT} gate is $a + 1$, where $a$ is the input bit. Because any gate can be constructed by cascading a certain number of \emph{AND} and \emph{XOR} gates in some order, we can translate the cascading sequence into an arithmetic expression. After expanding it, we will get a representation of the gate in the form of a sum of products.

Since all the operations happen on $\mathbb{F}_2$, certain properties hold true apart from associativity and commutativity laws for regular addition and multiplication.

\begin{Lem}
In the gate representation using the sum of products, the following hold true:
\begin{enumerate}
\item[(1)] $a+a=0$
\item[(2)] if $a \neq b$, $ab = 0$
\item[(3)] $a \cdot a = a$
\end{enumerate}
\end{Lem}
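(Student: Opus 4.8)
The plan is to prove all three identities by direct verification over the field $\mathbb{F}_2 = \{0, 1\}$. Since every symbol here denotes a single bit taking a value in $\{0, 1\}$, and the field has only two elements, each of the three statements reduces to a short, exhaustive case check; this finite verification would be my primary argument, with the more conceptual field-theoretic derivations offered as cleaner alternatives.

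First, for (1) I would use that $\mathbb{F}_2$ has characteristic $2$, so $a + a = 2a = 0$ for every $a$ (equivalently, $0 + 0 = 0$ and $1 + 1 = 0$ by inspection). For (3) I would observe that both field elements are idempotent, $0 \cdot 0 = 0$ and $1 \cdot 1 = 1$; conceptually this is the identity $a \cdot a = a$, which holds because the factorization $a(a+1) = 0$ is valid for every $a \in \mathbb{F}_2$ (one factor always vanishes). For (2), the key observation is that if $a \neq b$ as elements of $\{0, 1\}$ then $\{a, b\} = \{0, 1\}$, so one factor equals $0$ and therefore $ab = 0$; should $b$ instead be intended as the complement of $a$, the same conclusion follows algebraically from (1) and (3) via $a(a+1) = a \cdot a + a = a + a = 0$.

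I do not anticipate any real obstacle, as these are elementary verifications over a two-element set rather than substantive claims. The only point that warrants care is pinning down the intended meaning of the hypothesis "$a \neq b$" in (2) --- whether it refers to two bits with distinct values or to a pair of complementary literals --- because the proof branches (a case check versus a reduction to (1) and (3)) according to which reading is meant. I would therefore fix this interpretation explicitly before carrying out the verification.
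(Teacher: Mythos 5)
Your proof is correct. The paper in fact states this lemma with no proof at all --- it is offered as a self-evident property of arithmetic over $\mathbb{F}_2$ --- so your exhaustive verification over the two-element field supplies exactly what the paper leaves implicit, and there is no authorial argument to compare against. On the one point you flag as ambiguous, the hypothesis $a \neq b$ in (2): the paper's own later usage settles it in favor of your first reading (two bits whose \emph{values} differ), e.g.\ in the borrowed-pair construction of Chapter 4, where the fact that $x$ and $y$ ``take opposite values'' is what licenses dropping the term $yx$; in any case the two readings coincide, since $b \neq a$ in $\{0,1\}$ is equivalent to $b = a + 1$, and your argument covers both.
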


The representation using the sum of products is useful when we want a compact way of expressing the reversible gate output. It usually takes a more concise form than a complete truth table.

Apart from the two forms we just introduced, other common ones exist such as circuit representations, but for complicated gates they are much more difficult to analyze.

\subsection{Background}

Reversible logic provides a large potential in energy-efficient computers. Landauer's Principle \cite{landauer61} provides a lower bound on the energy loss for each bit of information that is erased during computation. Since no information is erased in reversible computing, devices built using purely reversible circuits can potentially achieve energy efficiency greater than that bounded using Landauer's Principle. Such observation motivated research in reversible computing. Furthermore, recent decades have seen the flourishing of quantum computing, which attracted more research interest in reversible computing since reversible circuits serve as important components in many quantum algorithms.

Among the most extensively studied problems in reversible computing, one of the most fundamental is the question of \emph{universality}. In other words, we are interested in whether all reversible gates can be constructed using certain basic reversible gates. We formally define universality as follows.  

\begin{Def}
Denote the universe of all reversible gates by $\mathcal{U}$. Let $S$ be a small subset of $\mathcal{U}$. For each gate $g$ in $\mathcal{U}$, we aim to represent it as a cascade of a sequence of gates $(g_1, g_2, ... , g_k)$, where each $g_i$ is a gate in $S$. If we are able to find such representations for all gates in $\mathcal{U}$, then we call the subset $S$ \emph{universal}.
\end{Def}

Just like \emph{NAND} and \emph{NOR} in the set of general gates, we are also interested in finding a small single reversible gate that is universal in $\mathcal{U}$. However, we can prove that no single reversible gate with a constant number of input bits can be universal without additional conditions. (See Appendix A) One way to make a non-universal gate universal, however, is to provide extra input lines called \emph{ancilla bits}. 

\section{Ancilla Bits}

An ancilla bit provides extra computational power to a gate by serving as an extra input line with a fixed input. For example, a 3-bit gate could request an ancilla bit with input 1, allowing itself to operate on 4 input bits in total. The ancilla bit can have its value changed in intermediate steps, but to maintain the reversibility of the gate, the ancilla bit has to be returned to its original value (1 in this example) as part of the output. An ancilla bit that is not guaranteed to be reset to its original value is said to produce a \emph{garbage output bit}.

\begin{Def}
Let $g$ be a $k$-bit reversible gate in $\mathcal{U}$, where $k$ is a constant. Let $\mathcal{U}_n$ denote the set of all $n$-bit reversible gates. For each gate $g_n$ in $\mathcal{U}_n$, we aim to find a cascade of $g$ operating on different input lines. In addition, we are allowed to use $b$ ancilla bits with constant values $(c_1, c_2, ... , c_b)$ where $c_i \in \{0, 1\}$. The cascade of $g$ should be equivalent to $g_n$ disregarding the ancilla output lines. In other words, if $g_n$ maps input $(i_1, i_2, ... , i_n)$ to output $(o_1, o_2, ... , o_n)$, then the cascade of $g$ plus $b$ ancilla bits should map input $(i_1, i_2, ... , i_n, c_1, c_2, ... , c_b)$ to output $(o_1, o_2, ... , o_n, c_1, c_2, ... , c_b)$. If such a cascade can be found for any gate in $\mathcal{U}_n$, we call the gate $g$ \emph{$n$-bit universal with $b$ ancilla bits}. 
\end{Def}

Many 3-bit reversible gates are $n$-bit universal given sufficient ancilla bits, allowing for garbage output bits. Two common examples are the \emph{Toffoli gate} and the \emph{Fredkin gate}.

\section{Toffoli Gate}

The Toffoli gate, or the \emph{Controlled-Controlled-Not} (\emph{$C^2NOT$}) gate, was first invented by Toffoli \cite{toffoli80}. It is $n$-bit universal given unlimited ancilla bits. Given three input bits $(a, b, c)$, the Toffoli gate inverts $c$ if and only if both $a$ and $b$ are 1, and it keeps $a$ and $b$ unchanged. Figure \ref{tof} shows the circuit and the truth table representations of the Toffoli gate. 

\begin{figure}[ht!]
\centering
\includegraphics[width=90mm]{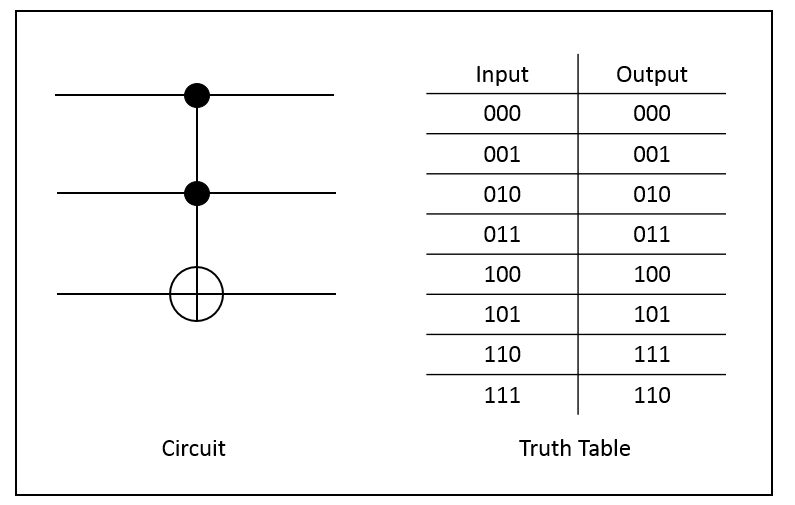}
\caption{Circuit and truth table representations of the Toffoli gate. \label{tof}}
\medskip
\small
In the circuit representation, the top two lines are called the \emph{control lines}, and the bottom line is called the \emph{target line}.
\end{figure}

We can also represent Toffoli gate using the sum of product representation mentioned in Section 1.1.1. $ab$ is equal to 1 if and only if both $a$ and $b$ are 1, so the Toffoli gate will return output $(a, b, ab+c)$.

By adding more control lines to the Toffoli gate, we get \emph{multi-control} Toffoli gates. A multi-control Toffoli gate with $k$ control lines (\cknot) takes in input $(a_1, a_2, ... , a_k, b)$, keeps the first $k$ bits unchanged, and inverts the last bit $b$ if and only if all the first $k$ bits are 1. Using the same sum of products representation, \cknot produces output $(a_1, a_2, ... , a_k, b + \prod_{i=1}^{k}{a_i})$.

\section{Fredkin Gate and Conservative Gates}

Another common gate that is $n$-bit universal with sufficient ancilla bits is the Fredkin gate, also know as \cswap, invented by Fredkin \cite{fredkin82}. Given three input bits $(a, b, c)$, the Fredkin gate swaps $b$ and $c$ if and only if $a$ is 1. It keeps $a$ unchanged. Figure \ref{fred} shows the circuit and the truth table representations of the Fredkin gate.

\begin{figure}[ht!]
\centering
\includegraphics[width=90mm]{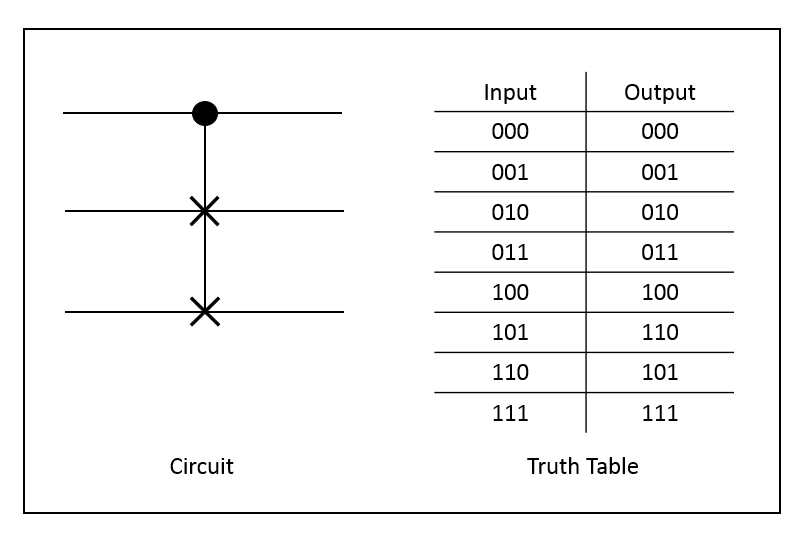}
\caption{Circuit and truth table representations of the Fredkin gate. \label{fred}}
\medskip
\small
In the circuit representation, the top line is called the \emph{control line}, and the bottom two lines are called the \emph{target lines}.
\end{figure}

We can also represent the Fredkin gate using the sum of products representation. Input $(a, b, c)$ corresponds to output $(a, (1-a)b + ac, (1-a)c + ab)$, which simplifies to $(a, b+ab+ac, c+ab+ac)$.

By adding more control lines to the Fredkin gate, we get \emph{multi-control} Fredkin gates. A multi-control Fredkin gate with $k$ control lines (\ckswap) takes in input $(a_1, a_2, ... , a_k, b, c)$, keeps the first $k$ bits unchanged, and swaps the last 2 bits $b$ and $c$ if and only if all the first $k$ bits are 1. Using the same sum of products representation, \ckswap produces output $(a_1, a_2, ... , a_k, b (1-\prod_{i=1}^{k}{a_i}) + c \prod_{i=1}^{k}{a_i}, c (1-\prod_{i=1}^{k}{a_i}) + b \prod_{i=1}^{k}{a_i})$.

An interesting property of the Fredkin gate is that it performs a conditional swap on the input bits, thus not changing the Hamming weight of the input string. We call this type of gate \emph{conservative}. Conservative gates have the following property which is trivial to prove:

\begin{Lem}
If a gate $g$ can be constructed as the cascade of a sequence of conservative gates, then gate $g$ must also be conservative. 
\end{Lem}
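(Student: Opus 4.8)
The plan is to prove this by induction on the length $k$ of the cascade, using the fact that Hamming weight is preserved at every stage. First I would fix notation: for a binary string $x = (x_1, \ldots, x_n)$, let its Hamming weight be $w(x) = \sum_{i=1}^{n} x_i$, where the sum is taken over the integers. A gate is conservative precisely when $w(g(x)) = w(x)$ for every input $x$; this is the formalization of the property ``not changing the Hamming weight of the input string'' that characterizes the Fredkin gate and other conservative gates.

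The central observation, which handles the only real subtlety, is that a conservative gate remains weight-preserving when it is embedded as one stage of a larger $n$-line circuit. A gate $g_i$ appearing in the cascade acts on some subset of the $n$ lines and leaves the remaining lines untouched. The bits on the untouched lines contribute the same amount to the total Hamming weight before and after $g_i$ is applied, while the bits on the lines that $g_i$ does act on have their combined weight preserved because $g_i$ is conservative on its own inputs. Summing these two contributions shows that the total weight $w$ over all $n$ lines is unchanged by $g_i$. I would isolate this as the key building block for the induction.

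With that in hand, the induction is immediate. Writing $g = g_k \circ \cdots \circ g_1$, the base case $k = 1$ is exactly the embedding observation, and for the inductive step I would assume that the partial composition $g_{k-1} \circ \cdots \circ g_1$ preserves $w$, then apply the embedding observation to $g_k$ to conclude that $g_k \circ \cdots \circ g_1$ preserves $w$ as well. Hence $w(g(x)) = w(x)$ for all inputs $x$, so $g$ is conservative.

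The hard part, such as it is, is bookkeeping rather than a genuine obstacle: one must phrase conservativity as preservation of the integer-valued Hamming weight over all $n$ lines, not merely over the few lines that a single gate touches, so that the contributions of the acted-on and untouched lines can be added cleanly at each stage. Once weight is defined globally on the full register, the result follows with no further computation, which is why the statement is labeled trivial.
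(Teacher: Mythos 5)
Your proof is correct, and since the paper offers no proof at all (it simply declares the lemma ``trivial to prove''), your induction on cascade length with the embedding observation---that a conservative gate acting on a subset of lines preserves the total Hamming weight over all $n$ lines---is precisely the standard argument the paper is implicitly invoking. Nothing in your write-up departs from or goes beyond what the paper intends; you have merely made the trivial proof explicit.
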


This lemma indicates that any gate constructed by cascading Fredkin gates and multi-control Fredkin gates must also be conservative. When using ancilla bits we need to return them to their original values, so even with unlimited ancilla bits, we are still unable to construct non-conservative gates using Fredkin gates without producing any garbage output.

Another question to ask about Fredkin gates is whether they are universal within the set of conservative gates, i.e. whether any $n$-bit conservative gate can be constructed using Fredkin gates with sufficient ancilla bits. This problem has practical applications in conservative logic synthesis and other topics. The answer is that Fredkin gate is $n$-bit universal in the subset of conservative gates using sufficient ancilla bits, without producing garbage output. \cite{fredkin82}

\section{Problem Statement and Related Work}

Ancilla bits help remove certain constraints on the type of gates that a base gate can generate. For example, cascading Toffoli gates cannot generate any $n$-bit gate that maps the all-zero input to a non-all-zero output, but it can do so by using ancilla bits with values set to 1.

However, ancilla bits have high implementation cost, thus making minimizing the necessary number of ancilla bits a practical problem. This thesis explores two main questions in this category: What is the minimum number of ancilla bits needed in order for a small reversible gate to be $n$-bit universal? What is the minimum number of ancilla bits needed for a small conservative reversible gate to be $n$-bit universal in the set of all conservative reversible gates?

The topic of reversible logic synthesis has been attracting research interest for decades. Due to the difference in their applications, various synthesis methods were proposed using different optimization metrics and base gate constraints. Toffoli \cite{toffoli80} first proved that the Toffoli gate is universal, using a recursive construction on the multi-control Toffoli gates. However, this method is not optimized. The paper also proved that, when using only the Toffoli gate as the base gate, a minimum number of 2 ancilla bits are needed just to generate the \emph{NOT} gate. Shende et al. \cite{shende02} later provided a construction for any even permutation reversible gate using no ancilla bit, with the NCT library (\emph{NOT}, \emph{CNOT}, Toffoli) as the set of base gates. However, this result uses three gates for construction instead of one. Aaronson et al. \cite{aaronson15} proposed a new construction for any reversible gate using only the Toffoli gate with 3 ancilla bits, and a construction for any conservative reversible gate using the Fredkin gate and 5 ancilla bits.

Apart from minimizing the number of ancilla bits, two other commonly used metrics are gate complexity and circuit depth \cite{saeedi10}. Various synthesis methods have been proposed in order to minimize gate complexity \cite{donald08,gupta06,maslov07,miller03}. These methods usually use multiple base gates as the generating set, for example, the NCT library, and sometimes the NCTSFP library (NCT plus \emph{SWAP}, Fredkin, and Peres \cite{peres85}). Also, the number of ancilla bits is not minimized in these synthesis methods.

\section{Organization}

This thesis focuses on minimizing the number of ancilla bits needed by any single base gate. Chapter 2 introduces a construction of any reversible gate using a variation of the Toffoli gate with 1 ancilla bit. Compared to the construction by Shende et al., this construction needs only a single gate as the base gate. We also define a new concept called the borrowed bit, which is a weaker version of the ancilla bit, and argue that 1 borrowed bit is enough for the above construction. Chapter 3 introduces a construction of any even permutation reversible gate using the variation of the Toffoli gate with no extra input line at all. Chapter 4 introduces a construction of any conservative reversible gate using the Fredkin gate with 1 ancilla bit. This improves the currently best known result \cite{aaronson15} by 4 ancilla bits. We also prove that, in terms of ancilla bits needed, 1 ancilla bit is the best we can achieve, and there exists no construction using only borrowed bits. Chapter 5 concludes the results and poses questions that remain to be solved on this topic.

\chapter{General Universality With One Borrowed Bit}

\section{Toffoli Gate and Universality}

There are several reasons why the Toffoli gate cannot be universal without ancilla bits. First, just like any small gate with a constant number of input bits, Toffoli is an even permutation when considered as an $n$-bit gate that only operates on 3 bits and leaves the rest unchanged. (See Appendix A) This means that it cannot generate odd permutations. In addition to that, Toffoli maps input $(0, 0, 0)$ to output $(0, 0, 0)$, so without any ancilla bits, any cascade using only Toffoli gates must also map the all-zero input to the all-zero output.

The first restriction applies to any $k$-bit gate where $k$ is smaller than $n$. The second restriction, however, is specific to the Toffoli gate. We can easily modify the Toffoli gate to remove this restriction.

\section{The Variated Toffoli Gate}

We invent a variation of the Toffoli gate by cascading the Toffoli gate with a \emph{NOT} on the second bit afterwards, and name it the \emph{variated Toffoli gate}. Figure \ref{var} shows its circuit and truth table representations. We can also represent it using the sum of products: It maps input $(a, b, c)$ to output $(a, b+1, ab+c)$.

\begin{figure}[ht!]
\centering
\includegraphics[width=90mm]{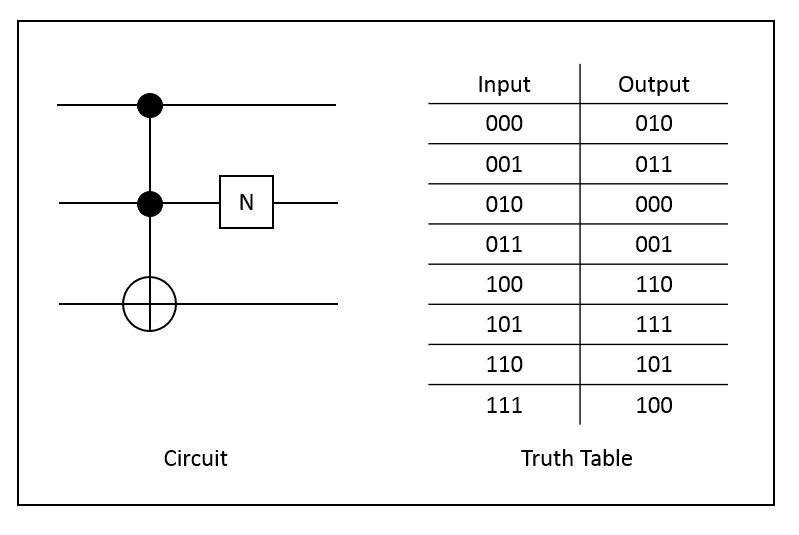}
\caption{The circuit and truth table representations of the variated Toffoli gate. \label{var}}
\end{figure}

From the truth table, it is not hard to see that the variated Toffoli gate is also a reversible gate. In addition, it maps the $(0, 0, 0)$ input to $(0, 1, 0)$. We claim that the variated Toffoli gate is $n$-bit universal with 1 ancilla bit. The construction comes in two steps: First we prove that any $n$-bit reversible gate can be constructed using a set of multi-control Toffoli gates. Then we provide a construction for any multi-control Toffoli gate using the variated Toffoli gate.

\section{Reversible Gate from \cinot}

We start by constructing any reversible gate using certain multi-control Toffoli gates.

\begin{Cla}
Let $T_n$ denote the set of gates ${NOT, CNOT, CCNOT, ... , \emph{\cnnot}}$. Given any $n$-bit reversible transformation $g$, there exists a construction of $g$ by cascading gates in $T_n$.
\end{Cla}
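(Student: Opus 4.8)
The plan is to show that the gates in $T_n$ generate the full symmetric group on the $2^n$ strings of $I_n = \{0,1\}^n$, so that every $n$-bit reversible transformation $g$, being a permutation of $I_n$, is realizable as a cascade. Because every permutation factors into transpositions, it suffices to realize an arbitrary transposition $(x\ y)$ that swaps two strings $x, y \in I_n$ and fixes all others. I would establish this in two reductions: first from an arbitrary transposition to transpositions of \emph{Hamming-adjacent} strings (those differing in a single coordinate), and then by realizing each adjacent transposition directly with a single multi-control Toffoli gate conjugated by \emph{NOT} gates. Since a single adjacent transposition is an odd permutation, this construction reaches odd as well as even permutations, so no parity obstruction arises.

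For the first reduction, I would induct on the Hamming distance $d$ between the two strings $x$ and $y$ being swapped. If $d=1$ they are already adjacent. If $d>1$, let $z$ be the string obtained from $y$ by flipping a single coordinate on which $x$ and $y$ disagree, so that $z$ is Hamming-adjacent to $y$ while $x$ and $z$ differ in only $d-1$ coordinates. The conjugation identity
\begin{equation}
(x\ y) = (z\ y)\,(x\ z)\,(z\ y)
\end{equation}
then expresses the desired swap as a product in which $(z\ y)$ is adjacent and $(x\ z)$ has Hamming distance $d-1$; applying the inductive hypothesis to $(x\ z)$ completes the reduction.

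To realize an adjacent transposition $(x\ y)$, where $x$ and $y$ agree on all coordinates except one target position $i$, I observe that this is exactly the action of a \cnnnot gate that flips coordinate $i$ conditioned on the remaining $n-1$ coordinates equalling the common pattern of $x$ and $y$. A bare \cnnnot fires only when every control line is $1$, so on its own it implements only the adjacent transposition whose common pattern is all-ones on the control lines. To obtain an arbitrary common pattern, I would conjugate the gate by \emph{NOT} gates on precisely those control lines whose common value is $0$: the pre-\emph{NOT}s convert a $0$-control into a $1$-control, the gate then fires on the correct pattern, and the post-\emph{NOT}s restore those lines. All gates invoked here --- the single multi-control Toffoli together with the conjugating \emph{NOT}s --- lie in $T_n$.

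The main obstacle, and the step deserving the most care, is verifying this final construction: one must confirm that conjugating the \cnnnot by \emph{NOT}s yields exactly the transposition $(x\ y)$ and disturbs no other string, so that the composed gate equals a single transposition rather than merely something with the correct support. Once this is checked, chaining the three reductions --- permutation into transpositions, transposition into adjacent transpositions, and adjacent transposition into a conjugated multi-control Toffoli --- produces the required cascade of gates in $T_n$ realizing $g$.
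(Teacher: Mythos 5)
Your proof is correct, but it takes a genuinely different route from the paper's. You decompose $g$ into transpositions, reduce each transposition to Hamming-adjacent ones by induction on Hamming distance via the conjugation identity $(x\ y) = (z\ y)(x\ z)(z\ y)$, and realize each adjacent transposition as a \cnnnot conjugated by \emph{NOT}s on the zero-valued control positions --- all of these steps check out, including the final verification you flag, since the pre- and post-\emph{NOT}s cancel on every string that fails the control pattern and exactly the two strings $x,y$ matching the pattern get their target bit flipped. The paper instead reduces everything to just two generators: $t_1$, the swap of the inputs $0$ and $1$ (realized exactly by your conjugation trick, applied once with the all-zeros control pattern), and $t_2$, the cyclic increment $k \mapsto k+1 \bmod 2^n$, realized by a ripple-carry cascade \emph{NOT}, \emph{CNOT}, \emph{CCNOT}, \ldots, \cnnnot; the reduction of an arbitrary permutation to $t_1$ and $t_2$ (Appendix B) goes through numerically adjacent swaps $(k,\ k+1)$ conjugated by powers of the shift, rather than through Hamming-adjacent swaps. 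Your approach is more self-contained and arguably more economical --- it needs no increment circuit and no shift-conjugation, and every gate block it produces is a clean single transposition. What the paper's two-generator form buys is leverage later in the thesis: Chapter 3's result on even permutations rests on counting the parity of the number of $t'_1$ and $t'_2$ factors in the decomposition, a bookkeeping argument that is natural with two fixed generators but would be harder to phrase in terms of your per-transposition construction.
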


Recall that in the permutation representation of a gate $g$, we find an equivalent permutation $\pi$ on the input set $I_n$, which consists of all binary inputs of length $n$. If we consider these binary strings as numbers, then equivalently $\pi$ is a permutation on the set $N$ of numbers $\{0, 1, 2, ... , 2^n-1\}$. The following is a well-known result about decomposing this permutation into a sequence of basic transformations.

\begin{Lem}
Any permutation $\pi$ on the set $N = \{0, 1, 2, ... , 2^n-1\}$ can be decomposed into the following two basic transformations:
\begin{enumerate}
\item[(1)] $t_1$: swapping 0 and 1 
\item[(2)] $t_2$: shifting $k$ to $(k+1) \Mod 2^n$, $\forall k \in N$
\end{enumerate}
\end{Lem}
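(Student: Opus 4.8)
The plan is to recognize this as the classical fact that a full cycle on a finite set, together with a transposition of two adjacent points of that cycle, generates the entire symmetric group. Identifying $N$ with the points $\{0,1,\ldots,2^n-1\}$, I would write $t_2 = (0\ 1\ 2\ \cdots\ 2^n-1)$ as a single $2^n$-cycle and $t_1 = (0\ 1)$ as the transposition swapping the first two points of that cycle. The goal is then to show that the group $\langle t_1, t_2\rangle$ generated by these two elements is the full symmetric group on $N$; an arbitrary permutation $\pi$ then decomposes into a word in $t_1$ and $t_2$ by expressing that word explicitly.

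First I would manufacture every adjacent transposition by conjugation. The key identity is
\[
t_2^{\,k}\, t_1\, t_2^{-k} = \bigl(k,\ (k+1)\Mod{2^n}\bigr),
\]
which holds because conjugating a transposition $(a\ b)$ by any permutation $\sigma$ relabels its entries to $(\sigma(a)\ \sigma(b))$, and here $t_2^{\,k}$ sends $j \mapsto (j+k)\Mod{2^n}$. Letting $k$ range over $0,1,\ldots,2^n-2$ produces all the non-wrapping adjacent transpositions $(0\ 1),(1\ 2),\ldots,(2^n-2\ \ 2^n-1)$ inside $\langle t_1, t_2\rangle$; the inverses $t_2^{-k}$ are available as positive powers since $t_2^{-1} = t_2^{\,2^n-1}$.

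Next I would build an arbitrary transposition $(i\ j)$ with $i<j$ out of these adjacent ones, using the standard telescoping conjugation that moves one endpoint outward a single step at a time, for instance via $(a\ c) = (a\ b)(b\ c)(a\ b)$. This shows every transposition lies in $\langle t_1, t_2\rangle$, and since every permutation of a finite set factors as a product of transpositions, every $\pi$ on $N$ is a product of $t_1$ and $t_2$ (and their inverses), which is exactly the claimed decomposition.

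The argument has no genuine obstacle; the only point requiring care is ensuring that $t_1$ transposes two \emph{consecutive} points of the cycle $t_2$, since this is precisely what forces the conjugates $t_2^{\,k} t_1 t_2^{-k}$ to sweep out all adjacent transpositions rather than landing in a proper subgroup. I would also note that the wrap-around transposition $\bigl(2^n-1,\ 0\bigr)$ obtained at $k = 2^n-1$ is not needed, since the $2^n-1$ non-wrapping adjacent transpositions already generate the symmetric group, so there is slack in the construction.
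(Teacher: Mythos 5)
Your proposal is correct and takes essentially the same approach as the paper's proof in Appendix~B: the paper's ``apply $t_2$ for $2^n-i$ times, apply $t_1$, apply $t_2$ for $i$ times'' step is exactly your conjugation $t_2^{\,k}\, t_1\, t_2^{-k}$ yielding the adjacent transposition $(k,\ k+1)$, and both arguments then reduce an arbitrary permutation to transpositions and each transposition to adjacent swaps. The only difference is presentational---you phrase the construction group-theoretically (generation of the symmetric group), while the paper describes the same shifts operationally.
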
 

This lemma reduces the problem of constructing any reversible gate to constructing the $t_1$ and $t_2$ gates using multi-control Toffoli gates. A proof of this lemma is included in Appendix B.

\begin{Lem}
The $n$-bit gates $t_1$ and $t_2$ can be constructed by cascading gates from $T_n$.
\end{Lem}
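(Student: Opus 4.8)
The plan is to handle the two gates separately, as each admits a direct circuit realization. For $t_1$, the two strings swapped are $0=(0,\ldots,0)$ and $1=(0,\ldots,0,1)$, which agree on $n-1$ lines (all $0$) and differ on a single line, the least significant one. A \cnnnot gate targeting that line flips it exactly when its $n-1$ controls are all $1$; to make it fire instead when the other $n-1$ bits are all $0$, I would conjugate it by \emph{NOT} gates, applying a \emph{NOT} to each of the $n-1$ control lines, then the \cnnnot gate, then a \emph{NOT} on those lines again to restore them. The result flips the target bit precisely on the two strings whose other $n-1$ bits vanish, hence swaps $0$ and $1$ and fixes every other string; this is exactly $t_1$, and every gate used lies in $T_n$.

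For $t_2$, the key observation is that the cyclic shift $k \mapsto (k+1) \Mod 2^n$ is binary increment, which has a standard ripple-carry form. Write the input as $x_{n-1}\cdots x_1 x_0$ with $x_0$ least significant. Incrementing flips bit $x_i$ exactly when a carry reaches position $i$, i.e. when $x_0 = x_1 = \cdots = x_{i-1} = 1$, so the natural building block for position $i$ is a \cinot gate with controls $x_0,\ldots,x_{i-1}$ and target $x_i$ (degenerating to a plain \emph{NOT} for $i=0$ and to a \cnnnot for $i=n-1$). Each of these gates belongs to $T_n$, so the remaining work is purely in cascading them correctly.

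The delicate point, and the one I expect to be the main obstacle, is the order of application, since the gates share and modify each other's control lines. Applying them from least significant upward fails: by the time the position-$i$ gate acts, its controls $x_0,\ldots,x_{i-1}$ have already been flipped and no longer carry the original input. I would therefore apply the gates in decreasing order of $i$, first the \cnnnot gate on $x_{n-1}$ and last the \emph{NOT} on $x_0$. The verification I would then carry out is that, when each gate fires, its controls still hold the original input bits: every control of the position-$i$ gate has index strictly below $i$, and positions are processed from high to low, so no control is disturbed before its gate acts. With the controls thus pinned to the input, I would finish by a short case analysis on the length of the trailing run of $1$s in $x$, checking that the cascade flips exactly that run together with the first following $0$ (and flips all bits, yielding $0$, in the wrap-around case $x = 2^n-1$), which is precisely the map $x \mapsto (x+1)\Mod 2^n$.
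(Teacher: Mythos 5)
Your proposal is correct and follows essentially the same construction as the paper: $t_1$ is realized as a \cnnnot conjugated by \emph{NOT} gates on the $n-1$ control lines, and $t_2$ as a ripple-carry cascade of \cinot gates, one per bit position. Your explicit care about gate ordering is in fact sharper than the paper's presentation, which lists the increment gates from the \emph{NOT} on the lowest bit up to the \cnnnot on the highest bit --- an order that, if the gates are applied as listed, computes the decrement $k \mapsto (k-1) \Mod{2^n}$ rather than $t_2$; your high-to-low order, with the verification that each gate's controls are still undisturbed when it fires, is the correct one.
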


\begin{proof}
The gate $t_1$ swaps elements $0$ and $1$, which, in the original binary settings, is equivalent to mapping input $(0, 0, ... , 0)$ to $(0, 0, ... , 1)$, mapping input $(0, 0, ... , 1)$ to $(0, 0, ... , 0)$, and mapping all other inputs to themselves. This is equivalent to inverting the last bit if and only if all first $n-1$ bits are $0$, which is exactly the opposite from the multi-control Toffoli gate. Thus, $t_1$ can be constructed by applying \emph{NOT} to each of the first $n-1$ bits, followed by a \cnnnot with the last bit being the target line, and finally \emph{NOT} again for each of the first $n-1$ bits in order to return them to their original states. Figure \ref{t1} shows a circuit representation for $t_1$.

\begin{figure}[ht!]
\centering
\includegraphics[width=90mm]{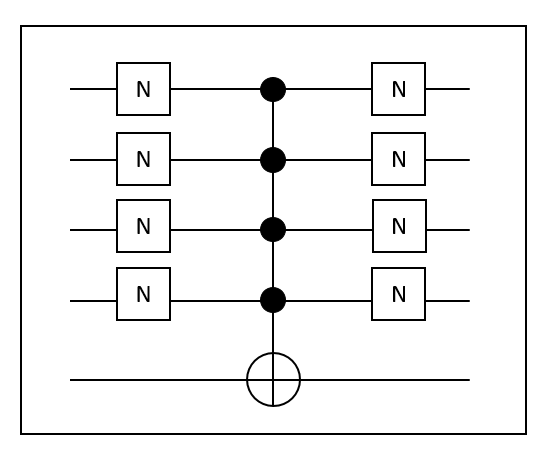}
\caption{Circuit representation for $t_1$. \label{t1}}
\end{figure}

The gate $t_2$ adds 1 to each element. In the original binary settings, the last bit is always inverted. The second to last bit is inverted if and only if there was a carry bit from the addition on the last bit. Similarly, the $i$th bit is inverted if and only if there was a carry bit from the $(i+1)$th bit.

More specifically, the $i$th bit is inverted if and only if all the lower bits are $1$, where $i = 1, 2, ... , n-1$. Thus $t_2$ can be implemented as a cascade of the following gates:

\begin{enumerate}
\item[(1)] \emph{NOT} on the last bit
\item[(2)] \emph{CNOT} on the last two bits, with the $(n-1)$th bit as the target line
\item[(3)] \emph{C$^2$NOT} on the last three bits, with the $(n-2)$th bit as the target line
\item[...]
\item[(n)] \cnnnot on all $n$ bits, with the highest bit as the target line 
\end{enumerate}

Figure \ref{t2} illustrates the construction on a 5-bit input.

\begin{figure}[ht!]
\centering
\includegraphics[width=90mm]{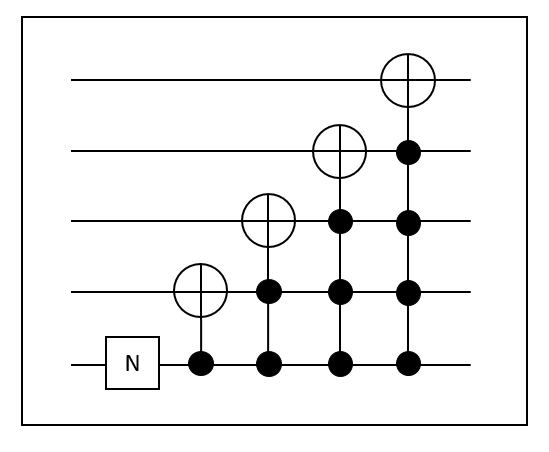}
\caption{Circuit representation for $t_2$. \label{t2}}
\end{figure}

\end{proof}

Combining the two lemmas above, we can conclude that any $n$-bit reversible gate can be constructed by cascading multi-control Toffoli gates with up to $n-1$ control lines.

\section{\cinot from Variated Toffoli}

In the previous section we constructed reversible gates using \cinot, so what remains to do is to construct \cinot from the 3-bit variated Toffoli gate. We first examine a few cases when $i$ is small.

\subsection{Base Cases}

When $i$ is $1$, \cinot becomes \emph{CNOT}, which can be implemented by applying the variated Toffoli twice. We apply the variated Toffoli in the same order as the input, with the third bit being the target line and the second bit being the inversion line. Suppose the input is $(a, b, c)$, then this will generate the following output:

$$(a, b, c) \rightarrow (a, b+1, ab+c) \rightarrow (a, b, a(b+1) + ab + c) = (a, b, a+c)$$

The resulting gate is equivalent to a \emph{CNOT} on the bits $a$ and $c$, with $c$ being the target line.

When $i$ is $0$, \cinot becomes \emph{NOT}, which can be implemented using a slightly more complicated procedure:

\begin{enumerate}
\item Apply variated Toffoli on the 1st, 2nd, and 3rd bits. Starting with input $(a, b, c)$, this gives us $(a, b+1, ab+c)$.
\item Apply variated Toffoli on the 2nd, 1st, and 3rd bits. This gives us $(a+1, b+1, a+c)$.
\item Apply variated Toffoli on the 1st, 2nd, and 3rd bits. This gives us $(a+1, b, a+c+(a+1)(b+1)) = (a+1, b, ab+b+c+1)$.
\item Apply variated Toffoli on the 2nd, 1st, and 3rd bits. This gives us $(a, b, ab+b+c+1+(a+1)b) = (a, b, c+1)$.
\end{enumerate}

The resulting gate is equivalent to a \emph{NOT} gate on $c$.

When $i$ is $2$, \cinot becomes the regular Toffoli gate, which can be implemented by a variated Toffoli followed by a \emph{NOT} on the second bit.

\subsection{Borrowed Bits}

Before constructing the general cases, we introduce a new concept called the \emph{borrowed bit}, which is a weaker version of the ancilla bit. The main difference between a borrowed bit and an ancilla bit is the specification on the input value.

\begin{Def}
A borrowed bit is an auxiliary input line provided to a gate such that it could be manipulated by the gate, but must be returned with the same value as its original state. A borrowed bit could start as either 0 or 1. The gate that requires a borrowed bit does not have the privilege of specifying its starting value.
\end{Def}

Borrowed bits are strictly less powerful than value-specified ancilla bits. However, their flexibility can make them useful when building reversible gates recursively. For example, gate $A$ requires a borrowed bit and changes it to an unknown intermediate stage during the computation. Gate $B$, which is a subroutine used as a building block for $A$, needs another borrowed bit. Then $B$ can simply reuse the bit from $A$, change it to anything, and revert it back to the original state before returning it back to $A$. The advantage comes from the fact that gate $B$ does not require the auxiliary bit to start with some particular value.

In practice, borrowed bits are also much cheaper than ancilla bits. Creating value-specific input lines such as ancilla bits might induce undesirable energy consumption. The borrowed bits, however, can simply be taken from unused memory. This significantly lowers the implementation cost.

\subsection{General Cases}

Now we examine how to construct \cinot where $i > 2$, using induction. In the previous section where we construct the reversible gates using \cinot, no ancilla bit is necessary. For the construction of \cinot, we aim at using only one ancilla bit. An ancilla bit is an auxiliary input line, so with the extra input bit, an $(i+1)$-bit \cinot can be expressed as an $(i+2)$-bit gate, where the first $(i+1)$ bits are used to perform \cinot, and the last bit remains unchanged.

The core idea of this construction is as follows: Assume we know a construction with at most one ancilla bit for gate \ciinot, then we have a method of constructing \cinot from the \ciinot gate using an additional ancilla bit. However, this alone does not guarantee that the total number of ancilla bits needed for \cinot is 1. The reason is that, the \ciinot gate is used within \cinot as a subroutine; the ancilla bit required by \ciinot requires a fixed input of 0 or 1, thus it cannot use the same ancilla bit as the one for the construction of \cinot, which is currently in an intermediate stage with an unknown value.

However, this issue can be resolved using the idea of borrowed bits. Assume that we have a construction for \ciinot using one borrowed bit, then we claim that we can construct \cinot using another borrowed bit. Since borrowed bits can be reused in subroutines, the total number of borrowed bits used to construct $\cinot$ would merely be 1. The construction of \cinot (an $(i+1)$-bit gate) is as follows, assuming the input is $(a_1, a_2, ... , a_i, a_{i+1}, x)$ with $a_{i+1}$ being the target line and $x$ being the borrowed bit:

\begin{enumerate}
\item[(1)] Apply \ciinot on the first $(i-1)$ bits and the last bit, with the last bit being the target line. This gives us $(a_1, a_2, ... , a_i, a_{i+1}, x + \prod_{k=1}^{i-1}{a_k})$. Denote the new value of the last bit ($x + \prod_{k=1}^{i-1}{a_k}$) by $x'$.
\item[(2)] Apply \emph{CCNOT} on bits $(a_i, x', a_{i+1})$, with $a_{i+1}$ being the target line. This gives us $(a_1, a_2, ... , a_i, a_{i+1} + a_i(x + \prod_{k=1}^{i-1}{a_k}), x')$, which simplifies to $(a_1, a_2, ... , a_i, a_{i+1} + a_ix + \prod_{k=1}^{i}{a_k}, x+\prod_{k=1}^{i-1}{a_k})$.
\item[(3)] Apply \ciinot on the first $(i-1)$ bits and the last bit again to return the borrowed bit to its original value. This gives us $(a_1, a_2, ... , a_i, a_{i+1} + a_ix + \prod_{k=1}^{i}{a_k}, x)$. Denote the new value of the target line ($a_{i+1} + a_ix + \prod_{k=1}^{i}{a_k}$) by $a'_{i+1}$.
\item[(4)] Apply \emph{CCNOT} on bits $(a_i, x, a'_{i+1})$ again, removing the redundant term in $a'_{i+1}$. This gives us $(a_1, a_2, ... , a_i, a_{i+1} + \prod_{k=1}^{i}{a_k}, x)$.
\end{enumerate}

Figure \ref{cinot} illustrates this construction on a 5-bit input, with the last line being the borrowed bit.

\begin{figure}[ht!]
\centering
\includegraphics[width=90mm]{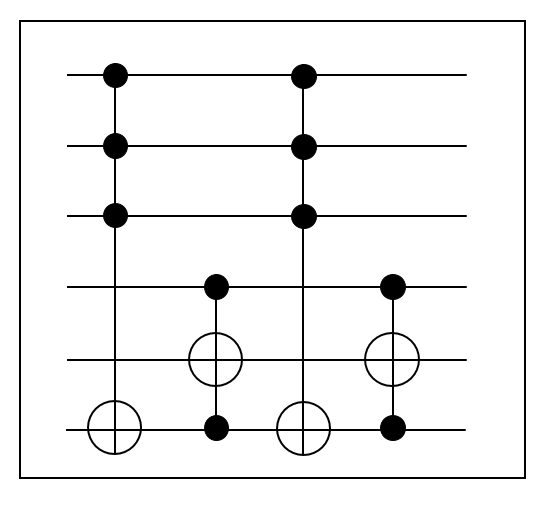}
\caption{Construction of \cfnot from \emph{C\textsuperscript{3}NOT}. \label{cinot}}
\medskip
\small
The last line is the borrowed bit, and the second to last line is the target line.
\end{figure}

The resulting gate is equivalent to applying \cinot on the first $(i+1)$ bits. The borrowed bit $x$ participated in the computation, but was returned to its original value eventually. A small subtlety is that we used \emph{CCNOT}, which is the regular Toffoli gate. Note that Toffoli can be constructed from the variated Toffoli without any borrowed bits as explained in Section 2.4.1.

By induction, we have constructed \cnnot using only 1 borrowed bit.

\section{Conclusion}

Combining the two previous sections, we provided a construction of any $n$-bit reversible gate using the variated Toffoli gate plus at most 1 borrowed bit. We also saw from previous chapters that a small gate like the variated Toffoli cannot possibly generate all $n$-bit reversible gates using pure cascades with no additional input lines. Thus, in terms of minimizing the number of ancilla bits needed for $n$-reversible gate construction, we have found an optimal solution. Compared to the other optimal construction method proposed by Shende et al. \cite{shende02}, our construction uses only a single base gate.

\chapter{Constructing Even Permutations without Ancilla Bits}

In the previous chapter we discussed the variated Toffoli gate, which can be used to generate any $n$-bit reversible gate using one borrowed bit. One extra input line is optimal for general universality, but for a special subset of reversible gates, we can achieve even better results. In this chapter, we explore the set of reversible gates that are equivalent to even permutations, and prove that the variated Toffoli can generate any such gate using no extra input line.

Lemma 5 states that any permutation can be broken into one of the two basic types of transformations $t_1$ and $t_2$. To make analysis easier for this chapter, we slightly tweak the definition of $t_1$ without changing the correctness.

\begin{Lem}
Any permutation $\pi$ on the set $N = \{0, 1, 2, ... , 2^n-1\}$ can be decomposed into the following two basic transformations:
\begin{enumerate}
\item[(1)] $t'_1$: swapping $2^n-2$ and $2^n-1$ 
\item[(2)] $t'_2$: shifting $k$ to $(k+1) \Mod 2^n$, $\forall k \in N$
\end{enumerate}
Furthermore, if $\pi$ is an even permutation, then the number of $t'_1$ and the number of $t'_2$ in the decomposition are both even numbers.
\end{Lem}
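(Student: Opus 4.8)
The plan is to split the statement into two parts: the decomposition itself, and the parity refinement (the ``Furthermore'' clause). For the decomposition I would reduce to the earlier lemma that any permutation on $N$ decomposes into $t_1$ (swap $0$ and $1$) and $t_2$ (shift by one). Since $t'_2 = t_2$ already, it suffices to rewrite the old transposition $t_1$ as a word in $t'_1$ and $t'_2$. The natural tool is conjugation by the shift: writing $\sigma = t'_2$, the map $\tau \mapsto \sigma \tau \sigma^{-1}$ carries a transposition $(a\,b)$ to $(\sigma(a)\,\sigma(b))$. Applying $\sigma$ twice sends $t'_1 = (2^n-2,\,2^n-1)$ to $(0,\,1) = t_1$, so $t_1 = (t'_2)^2\, t'_1\, (t'_2)^{-2}$; and since $(t'_2)^{-1} = (t'_2)^{2^n-1}$, this is a genuine positive word in the two new generators. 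Substituting this word for each occurrence of $t_1$ in the old decomposition yields a decomposition of $\pi$ into $t'_1$ and $t'_2$ alone.

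For the parity statement the first observation is that both generators are \emph{odd} permutations: $t'_1$ is a transposition, while $t'_2$ is a single $2^n$-cycle with sign $(-1)^{2^n-1} = -1$. Hence if a decomposition uses $p$ copies of $t'_1$ and $q$ copies of $t'_2$, then $\operatorname{sgn}(\pi) = (-1)^{p+q}$, so an even $\pi$ forces $p+q$ to be even, i.e. $p \equiv q \pmod 2$.

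The main obstacle is that this sign computation only pins $p$ and $q$ to the \emph{same} parity, not to being individually even; in fact the parities of $p$ and $q$ cannot separately be invariants of the decomposition (the abelianization of $S_{2^n}$ is only $\mathbb{Z}/2$), so the claim must be read as asserting the existence of a decomposition with both counts even. The key step is therefore to exhibit a relation — a nontrivial word equal to the identity — that uses an \emph{odd} number of each generator, since inserting it once flips both parities simultaneously and repairs the ``both odd'' case. I would produce such a relation by computing the product $t'_2 t'_1$ directly: it fixes one point and acts as a $(2^n-1)$-cycle on the remaining elements, so its order is $2^n-1$ and $(t'_2 t'_1)^{2^n-1} = e$. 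Written out, this word contains exactly $2^n-1$ copies of each generator, an odd number of both. Thus, given any initial decomposition into $t'_1$ and $t'_2$, I compute $p$ and $q$; if they are both even I am done, and if they are both odd I append this identity word to make both even, completing the argument.
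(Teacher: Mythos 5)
Your proposal is correct, but it reaches the conclusion — especially the parity clause — by a genuinely different route than the paper. The paper's proof (Appendix B) is purely constructive: it decomposes $\pi$ into transpositions, each transposition into an odd number of adjacent swaps, and realizes each adjacent swap $(i,\,i+1)$ by conjugating the distinguished transposition with powers of the shift, so that every adjacent swap consumes exactly one $t'_1$ and exactly $2^n$ copies of $t'_2$. Evenness of both counts then falls out by bookkeeping: an even permutation has an even number of transpositions, hence an even number of adjacent swaps, hence an even number of $t'_1$'s and an even multiple of $2^n$ many $t'_2$'s. Your first half (rewriting $t_1 = (t'_2)^2\,t'_1\,(t'_2)^{-2}$ and substituting into the earlier lemma) is the same conjugation idea, merely packaged as a reduction. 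Where you genuinely diverge is the ``furthermore'' clause: instead of counting inside one particular construction, you use the sign homomorphism to force $p \equiv q \pmod 2$ for \emph{any} decomposition of an even $\pi$, and then repair the both-odd case by appending the identity relation $(t'_2 t'_1)^{2^n-1} = e$, which contains an odd number of each generator. This buys something the paper's argument does not: it shows that an arbitrary decomposition — not just the one produced by the paper's recipe — can be corrected so that both counts are even, and it makes explicit a point the paper leaves implicit, namely that the parity claim can only be read existentially (indeed $t'_1 t'_2$ is an even permutation written with one copy of each generator, so the claim cannot hold for every decomposition). The price is the extra observation that $t'_2 t'_1$ fixes one point and cycles the remaining $2^n-1$ elements, hence has odd order, whereas the paper gets its parities for free from its explicit construction.
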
 

See Appendix B for the proof of this lemma.

Given any even reversible gate $g$, consider the sequence of $t'_1$ and $t'_2$ that its corresponding permutation is broken down into. Since there are an even number of $t'_1$ and $t'_2$ respectively, we can break down the sequence into pairs of adjacent transformations. Thus we only need to figure out how to construct gates corresponding to each possible pair of transformations using the variated Toffoli gate.

We build the gates recursively. In the previous chapter we already saw how the variated Toffoli gate can generate \emph{NOT}, \emph{CNOT}, and \emph{CCNOT} without any extra input lines. We also proved that the set $T_n$ can generate all $n$-bit reversible gates, so the variated Toffoli can generate any 3-bit gate without any extra input lines. We use this as our base case.

Assume that a construction exists for any $(i-1)$-bit gate that is an even permutation. Now we aim at constructing the $i$-bit even permutation gates. We consider the sequence of \T and \TT broken into adjacent pairs; these pairs fall into one of the four possible cases: (1) \T\T (2) \TT\TT (3) \T\TT (4) \TT\T. For convenience, we consider these combined transformations as matrices and denote them as $M_1$, $M_2$, $M_3$, and $M_4$ respectively.

In the first case, the two \T cancel out, so $M_1$ is the identity matrix.

In the second case, the combined transformation corresponds to adding 2 to any input, which is equivalent to leaving the last bit unchanged and adding 1 to the first $(i-1)$ bits. This is an even permutation gate with $i-1$ bits, so it is already handled based on our assumption.

The third case and the fourth case are similar. As an example, we show the construction for the third case. $M_3$ consists of \T followed by \TT. In the previous chapter, we saw that \TT can be broken into a cascade of \emph{NOT}, \emph{CNOT}, \emph{CCNOT}, ... , \ciinot. The highest bit can be used as a borrowed bit for all but the last \ciinot in the sequence, so it remains to be solved how to use the variated Toffoli gate to generate \T followed by \ciinot with no extra input line. This \ciinot operates on all the bits, with the highest bit being the target line. Notice that \T, the swap of $2^i-2$ and $2^i-1$, is equivalent to a \ciinot on all $i$ bits, with the lowest bit being the target line.

Suppose that the input is $(a, c_1, c_2, ... , c_{i-2}, b)$, then the gate corresponding to these two \ciinot cascaded together will change the highest bit $a$ to $a + a\prod_{k=1}^{i-2}{c_k} + b\prod_{k=1}^{i-2}{c_k}$, and change the lowest bit $b$ to $b+a\prod_{k=1}^{i-2}{c_k}$.

In order to construct such a gate using smaller gates, we consider breaking the elements $(c_1, c_2, ... , c_{i-2})$ into roughly even halves. Denote the first half by $x$ and the second half by $y$. Let $X$ be the product of all elements in $x$ and $Y$ be the product of all elements in $y$. We can construct the gate using a sequence of \cknot:

\begin{eqnarray*}
&&(a, x, y, b) \\
&\rightarrow& (a+bX, x, y, b) \\
&\rightarrow& (a+bX, x, y, b+aY+bXY) \\
&\rightarrow& (a+aXY+bXY, x, y, b+aY+bXY) \\
&\rightarrow& (a+aXY+bXY, x, y, b+aXY) \\
\end{eqnarray*}

Notice that each of these \cknot gates has fewer than $i$ control lines, which are already handled based on the inductive assumption. This completes the construction for any conservative reversible gate given all even permutation gates with up to $i-1$ bits. Thus, the variated Toffoli gate can be used to generate any even permutation reversible gate using no extra input lines.

\chapter{Conservative Universality Using One Ancilla Bit}

Conservative logic gates are a subset of logic gates with practical applications in modeling physical systems with quantity conservation. The Fredkin gate is a common conservative reversible gate. In this chapter we examine a way to construct all conservative reversible gates using 3-bit Fredkin gates plus one ancilla bit.

Fredkin gates are conservative, and thus can only generate conservative transformations.

We can divide the set $I_m = \{0, 1\}^m$ into $m+1$ disjoint subsets $S_0, S_1, ... , S_m$, where $S_i$ represents the subset of all such strings with Hamming weight $i$. $S_i$ contains ${m}\choose{i}$ elements.

If we consider a conservative reversible gate as a permutation $\pi$ on $I_m$, then due to the fact that conservative gates preserve Hamming weights, $\pi$ must not map an element in subset $S_i$ to another element in a different subset $S_j$. Thus, we can decompose $\pi$ into permutations within each of the subsets. That is, $$\pi = \pi_0 \cdot \pi_1 \cdot ... \cdot \pi_m$$ where $\pi_i$ denotes the permutation on $S_i$. This representation for conservative reversible gates will be used throughout this chapter.

\section{From \ciswap to Conservative Reversible Gates}

We consider breaking down the gate construction into permutations, using \ciswap to construct each $\pi_k$, $k = 0, 1, ... , m$. Basically we aim at constructing any transposition within $S_k$ using \ciswap. $S_k$ consists of strings with Hamming weight $k$. Thus, we can use \ckkswap as a transposition of any two strings in $S_k$ with a Hamming distance of 2. However, \ckkswap also takes effect on strings in $S_{k+2}, S_{k+2}, ... , S_{m-1}$. In order to not mess up the permutations on the subsets that have already been constructed, we build the permutations starting from subsets with lower Hamming weight.

Now what remains is to construct $\pi_k$ using \ckkswap. It suffices to use \ckkswap to construct any transposition that is part of $\pi_k$. Take $S_{m-2}$ as an example, and consider the transposition on two strings $s_{a, b}$ and $s_{c, d}$, where $s_{i, j}$ indicates a string with locations $i$ and $j$ being 0. Consider an intermediate string $s_{b, c}$. Then we have the following procedure to swap $s_{a, b}$ and $s_{c, d}$, propagating through the intermediate string:

\begin{enumerate}
\item[(1)] Apply \cmmswap, using bits other than $a, b, c$ as control lines, and bits $a, c$ as the target lines. This swaps $s_{a, b}$ and $s_{b, c}$.
\item[(2)] Apply \cmmswap, using bits other than $b, c, d$ as control lines, and bits $b, d$ as the target lines. This swaps $s_{b, c}$ and $s_{c, d}$.
\end{enumerate}

This is based on the assumption that $b \neq c$. When $b = c$, the conversion can be done using a single \cmmswap.

After these steps, $s_{a, b}$ becomes $s_{c, d}$, while $s_{c, d}$ is now $s_{b, c}$. We need one more step to propagate $s_{c, d}$ back to $s_{a, b}$:

\begin{enumerate}
\item[(3)] Apply \cmmswap, using bits other than $a, b, c$ as control lines, and bits $a, c$ as the target lines. This swaps $s_{a, b}$ and $s_{b, c}$.
\end{enumerate}

After these steps, we will have swapped $s_{a, b}$ and $s_{c, d}$ without changing any other input in $S_{m-2}$.

In the general case of $S_k$ where we want to construct the transposition of two strings $s_1, s_2$, we can always find a sequence of intermediate strings between $s_1$ and $s_2$, such that each pair of adjacent strings have Hamming distance 2. Thus we can use \ckkswap to swap any pair of adjacent strings.

The first phase involves shifting $s_1$ all the way to $s_2$. This converts $s_1$ to $s_2$ while shifting all other strings in the sequence to the left by 1. The second phase involved shifting $s_2$ all the way back to $s_1$. This converts $s_2$ to $s_1$ while shifting all other strings back to their original locations. Figure \ref{swap} illustrates the two phases when $s_1$ and $s_2$ are Hamming distance $10$ away.

\begin{figure}[ht!]
\centering
\includegraphics[width=90mm]{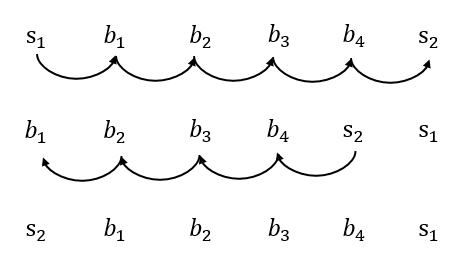}
\caption{Two phases constructing the transposition between two strings. \label{swap}}
\medskip
\small
The last line is the borrowed bit, and the second to last line is the target line.
\end{figure}

Thus we have a construction of any conservative reversible gate using \ckswap.

\section{Constructing \ciswap}

We consider a recursive implementation of \ciswap, similar to the process for constructing \cinot. We explore two methods of constructing \ciswap from \emph{C\textsuperscript{i-1}SWAP}. The first method uses an ancilla bit, while the second method uses a borrowed pair of bits that are opposite. The final construction for \ciswap merges the two methods and uses one ancilla bit in total.

\subsection{The Ancilla Bit Construction}

This construction requires an ancilla bit with its original value set to 0. Assume the input is $(a_1, a_2, ... , a_{i+1}, a_{i+2}, x)$ with $(a_{i+1}, a_{i+2})$ being the target lines and $x$ being the ancilla bit, whose initial value is set to 0. We also assume that there exists a construction for \emph{C\textsuperscript{i-1}SWAP}, disregarding the extra input requirements for this subroutine. The construction is as follows:

\begin{enumerate}
\item[(1)] Apply \ciiswap on the first $i$ bits and the last bit $x$, with $a_{i}$ and $x$ being the target lines. After this step, the last bit becomes $a_i$ if all but the last control lines are set to 1. It remains $0$ otherwise.
\item[(2)] Apply \cswap on the last three bits, with $a_{i+1}$ and $a_{i+2}$ being the target lines, and the last bit being the control line.

When there is at least one 0 in the set $(a_1, a_2, ... , a_{i-1})$, the control line will be 0, and thus the target lines will not be swapped. Otherwise if $(a_1, a_2, ... , a_{i-1})$ are all 1, the control line will now be $a_i$, and thus the target lines are changed if and only if $a_i$ is also 1. This effectively swaps the target lines if and only if all control bits are $1$.
\item[(3)] Apply \ciiswap on the first $i$ bits and the last bit again. The ancilla bit is reverted to its original value because the first $i$ bits were never changed.
\end{enumerate}

Notice that this alone does not give us a single-ancilla-bit construction for \emph{C\textsuperscript{i}SWAP}, because the ancilla bit needed during this step cannot be reused in the subroutines.

\subsection{The Borrowed Pair Construction}

This construction requires two extra input lines whose original values are set to 0 and 1. Assume the input is $(a_1, a_2, ... , a_{i+1}, a_{i+2}, x, y)$ with $(a_{i+1}, a_{i+2})$ being the target lines and $(x, y)$ being the borrowed pair, where $x$ is set to 0 and $y$ is set to 1. The construction is as follows:

\begin{enumerate}
\item[(1)] Apply \emph{CSWAP} on bits $(a_1, x, y)$, with $x$ being the control line and $(a_1, y)$ being the target lines. This changes $y$ to $y' = y(1-x) + a_1x$, and changes $a_1$ to $a'_1 = a_1(1-x)+yx$, which simplifies to $a_1 + a_1x$ because $x$ and $y$ take opposite values.
\item[(2)] Apply \ciiswap on the first $i$ bits and $x$, with $a_i$ and $x$ being the target lines. The new value for $a_i$ after this step is:
\begin{eqnarray*}
a'_i &=& a_i (1-\prod_{k=1}^{i-1}{a_k}) + x \prod_{k=1}^{i-1}{a_k} \\
&=& a_i + (a_1 + a_1x) \prod_{k=2}^{i}{a_k} + x (a_1 + a_1x) \prod_{k=2}^{i-1}{a_k} \\
&=& a_i + \prod_{k=1}^{i}{a_k} + x \prod_{k=1}^{i}{a_k}
\end{eqnarray*}
\item[(3)] Apply \emph{CSWAP} on bits $(a'_i, a_{i+1}, a_{i+2})$, with $(a_{i+1}, a_{i+2})$ as the target lines. At this point, the swap of $(a_{i+1}, a_{i+2})$ depends on the value $a_i + \prod_{k=1}^{i}{a_k} + x \prod_{k=1}^{i}{a_k}$.
\item[(4)] Apply \ciiswap on the first $i$ bits and $x$ once again to revert $a'_i$ and $x'$ to their original values $a_i$ and $x$.
\item[(5)] Apply \emph{CSWAP} on bits $(a_1, x, y)$ again to revert $y'$ and $a'_1$ to their original values $y$ and $a_1 $.
\item[(6)] - (10) Repeat gates (1) to (5), replacing $x$ with $y$, and $y$ with $x$. This initiates another swap on the bits $(a_{i+1}, a_{i+2})$, controlled by the value $a_i + \prod_{k=1}^{i}{a_k} + y \prod_{k=1}^{i}{a_k}$
\end{enumerate}

After cascading the gates above, the target lines $(a_{i+1}, a_{i+2})$ are conditionally swapped twice, using two different control values $a_i + \prod_{k=1}^{i}{a_k} + y \prod_{k=1}^{i}{a_k}$ and $a_i + \prod_{k=1}^{i}{a_k} + x \prod_{k=1}^{i}{a_k}$. This is equivalent to a single conditional swap on the target lines with the control bit being the sum of these two values, which is $(x+y)\prod_{k=1}^{i}{a_k}$. Because $x$ and $y$ take opposite values, $x+y$ must be $1$, so swapping $(a_{i+1}, a_{i+2})$ is effectively controlled by $\prod_{k=1}^{i}{a_k}$, which includes all the control lines.

Notice that in the calculation of the final control bit value, the only requirement on $(x, y)$ is actually the fact that they are opposite. It is not hard to verify that if $x$ is set to 1 and $y$ is set to 0, the same argument still holds. This is why we call it a borrowed pair; the requirement is still weaker than a pair of ancilla bits.

\subsection{Merged construction for \ciswap}

The final construction of \ciswap makes use of both constructions. The core idea of this combination step is based on the following observation: In the borrowed pair construction, when the borrowed pair of bits $(x, y)$ share the same initial value, then the output will be exactly the same as the input, failing our original intention of swapping $a_{i+1}$ and $a_{i+2}$ controlled by $\prod_{k=1}^i{a_k}$. However, if we use the target lines as the borrowed bits, then when the target lines share the same initial value, whether the gate swaps them or not has no effect on the final output.

This observation provides us a way to incorporate both constructions in one. When constructing \ciswap from \emph{C\textsuperscript{i-1}SWAP}, we inevitably need to pick one of the two constructions for \emph{C\textsuperscript{i-1}SWAP}. In order to reuse bits to the largest extent, we choose the borrowed pair construction, using the two target lines $(a_{i+1}, a_{i+2})$ in the \ciswap as the borrowed pair for \emph{C\textsuperscript{i-1}SWAP}. We use this method to construct \ccswap from \emph{CSWAP}, \cccswap from \emph{CCSWAP}, and so on. In this way, the only extra input lines we need from outside are from the last layer when we construct \ciswap from \emph{C\textsuperscript{i-1}SWAP}, which are a borrowed pair $(x, y)$ whose values are opposite. This gives us a construction of \ciswap using Fredkin gates with 2 ancilla bits. These are ancilla bits instead of borrowed bits because their initial values cannot be completely arbitrary.

However, with the ancilla bits construction, we can further reduce the number of ancilla bits needed. We use the borrowed pair construction for constructing up to \emph{C\textsuperscript{i-1}SWAP}, using the same target line trick described in the previous paragraph. Then, when constructing \ciswap from \emph{C\textsuperscript{i-1}SWAP}, we switch to the ancilla bit construction, using an ancilla bit with its value set to 0. This gives us a construction of \ciswap using Fredkin gates with only 1 ancilla bit.

Combined with the fact that any conservative reversible gate can be generated using \ckswap without any extra input lines, we now have a way to construct any $n$-bit conservative reversible gate using Fredkin gates and 1 ancilla bit. In the next section, we examine if it is possible to do even better.

\section{Necessity of Ancilla Bits} We mentioned in previous chapters that a borrowed bit is weaker than an ancilla bit in the sense that it does not guarantee a fixed input. In the previous section, we provided a construction of any $n$-bit conservative reversible gate using the Fredkin gate plus 1 ancilla bit. But since borrowed bits are cheaper, is it possible to find a construction using only borrowed bits?

\begin{Cla}
Even with an unlimited number of borrowed bits, we still cannot generate all $n$-bit conservative reversible gates with 3-bit Fredkin gates.
\end{Cla}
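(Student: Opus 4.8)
The plan is to exhibit a single conservative reversible gate that no cascade of Fredkin gates can realize, no matter how many borrowed bits are supplied. The guiding principle is to find a property preserved by every Fredkin cascade on the combined $(n+m)$-bit register (the $n$ data lines together with the $m$ borrowed lines) and then name a conservative gate that violates it. The feature of borrowed bits I would exploit is the one that distinguishes them from ancillas: a construction must succeed for \emph{every} starting assignment of the borrowed lines, and in particular it must succeed when all $m$ borrowed bits start at $0$.

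The key observation I would establish first is that a single Fredkin gate fixes every combined string of total Hamming weight at most $1$. Indeed, a Fredkin gate interchanges only the two triple-patterns $(1,1,0)$ and $(1,0,1)$ on its three lines and leaves every other triple-pattern fixed; both interchanged patterns have weight $2$ on those three lines, so any global string of weight $\le 1$ presents a triple-pattern of weight $\le 1$ and is untouched. This property is closed under composition, so every Fredkin cascade fixes all combined strings of weight $\le 1$. Now start the borrowed bits at $0^m$: a data string $e$ of Hamming weight $1$ sits in the register as the weight-$1$ string $(e, 0^m)$, which the cascade must leave unchanged, whereas correctness demands the output $(g(e), 0^m)$. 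Hence any realizable $g$ must fix every weight-$1$ data string. Choosing $g$ (for $n \ge 2$) to be the transposition of two distinct weight-$1$ strings --- a permutation that is manifestly reversible and conservative --- I obtain a conservative gate that moves a weight-$1$ string, and so cannot be generated, which proves the claim.

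The step I expect to be the real obstacle, and the reason the result is genuinely stronger than the plain ``no extra input lines'' converse, is that the naive parity argument fails. One is tempted to say that a Fredkin gate is an even permutation of the full $(n+m)$-bit space (a product of $2^{n+m-3}$ transpositions) and so cannot produce an odd target; but when $m \ge 1$ the realized permutation $G$ is $2^m$ disjoint copies of $g$, one per borrowed assignment, so its global parity is $(\mathrm{sgn}\, g)^{2^m}$, which is always even. The whole-space parity invariant therefore carries no information once borrowed bits are allowed --- precisely why ancilla-free universality with borrowed bits is subtler than the ordinary converse. My resolution is to replace the global invariant by the weight-graded one above, which is immune to this collapse because restricting to the weight-$1$ class isolates exactly the $0^m$ borrowed slice.

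Finally, I would remark that the weight-$1$ witness is the cheapest instance of a more general obstruction: within each total-weight class $T_w$ of the combined register, a Fredkin gate is a product of $\binom{n+m-3}{w-2}$ transpositions, so each weight class carries its own parity invariant that the weight-class permutations of $g$ must respect for some common gate count. Tracking these graded parities pins down precisely which conservative gates are blocked; the weight-$1$ case is simply where the binomial coefficient vanishes and the obstruction is most transparent.
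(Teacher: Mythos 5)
Your proof is correct, and it takes a genuinely different --- and more elementary --- route than the paper's. The paper argues via a graded parity invariant: it assigns to each conservative gate on the combined $m$-bit register ($m=n+b$) a \emph{parity vector} recording the parity of the induced permutation on each Hamming-weight class, notes that parity vectors add modulo $2$ under cascading, computes these vectors for the multi-control Fredkin gates (they are rows of Pascal's triangle modulo $2$), and then shows by a short linear-algebra argument that the parity vector of \emph{CCSWAP} does not lie in the span of those of \emph{SWAP} and \emph{CSWAP}; its witness is therefore \emph{CCSWAP}, and the same computation yields the stronger hierarchy statement that the gates \emph{C\textsuperscript{i}SWAP} with $i<k$ can never generate \emph{C\textsuperscript{k}SWAP}, no matter how many borrowed bits are available. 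You instead use a fixed-point invariant: every Fredkin gate, hence every Fredkin cascade, fixes all combined strings of total Hamming weight at most $1$, and since a borrowed-bit construction must in particular be correct when all borrowed lines start at $0$, any generable $g$ must fix every weight-$1$ data string; a transposition of two distinct weight-$1$ strings is then an immediate conservative witness. What your approach buys is brevity and the avoidance of any parity bookkeeping, and it even rules out the plain $2$-bit \emph{SWAP}; what the paper's approach buys is a finer obstruction that blocks gates your invariant cannot see --- \emph{CCSWAP} fixes every string of weight at most $2$, so your fixed-point argument is silent about it --- together with the whole \emph{C\textsuperscript{k}SWAP} hierarchy. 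Your closing remark about the per-weight-class transposition counts $\binom{n+m-3}{w-2}$ is in fact exactly the paper's parity-vector machinery, so you have correctly identified the generalization, and your observation that the whole-space parity invariant collapses to $(\mathrm{sgn}\,g)^{2^m}=1$ once borrowed bits are present is a correct and well-placed explanation of why some weight-graded invariant is genuinely required here.
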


\begin{proof}
Assume we have $b$ borrowed bits. The statement is thus equivalent to the statement that Fredkin gates cannot be used to generate all $n$-bit conservative reversible gates in the $(n+b)$-bit universe. For convenience, let $m$ denote $n+b$.

Using the permutation decomposition technique we introduced at the beginning at this chapter, we study the parity of the permutation $\pi_i$. Take \emph{SWAP} as an example. $S_0$ and $S_m$ contain only one element each, so \emph{SWAP} is an even permutation on these two subsets. On $S_1$, \emph{SWAP} is an odd permutation because it swaps two elements and keeps everything else unchanged. On $S_i$, without loss of generality we assume that \emph{SWAP} operates on the first $i$ bits. Thus it pairs up two strings that start with $01$ and $10$ but all of the remaining $m-2$ bits are the same, and performs a transposition on those two numbers. $\pi_2$ is equivalent to the product of all such transpositions. Since the Hamming weight is $i$, there must be exactly $i-1$ 1's in the remaining $n-2$ bits. Thus there are a total of ${m-2}\choose {i-1}$ pairs, corresponding to ${m-2} \choose {i-1}$ transpositions. Whether \emph{SWAP} is an even or permutation on $S_i$ depends on the parity of this number.

We define a \emph{parity vector} for each conservative reversible gate. A parity vector $p = (p_0, p_1, ..., p_m)$ represents the permutation parity in each of the $m$ subsets. Specifically, $p_i$ is 1 if $\pi_i$ is an odd permutation, 0 if even. The parity vector for \emph{SWAP} would then be $$(0, 1, {{m-2} \choose 1} \Mod 2, {{m-2} \choose 2} \Mod 2, ... , {m-2 \choose m-3} \Mod 2, 1, 0)$$ Similarly, the parity vector for \emph{CSWAP} would be $$(0, 0, 1, {m-3 \choose 1} \Mod 2, {{m-3} \choose 2} \Mod 2, ..., {m-3 \choose m-4} \Mod 2, 1, 0)$$ The parity vector for \ckswap is $$(0, ..., 0, 1, {m-2-k \choose 1} \Mod 2, {m-2-k \choose 2} \Mod 2, ..., {m-2-k \choose m-3-k} \Mod 2, 1, 0)$$ These parity vectors put together are equivalent to the $\Mod 2$ remainder of the elements in the Pascal triangle.

Now we can discuss whether a subset $G$ of the gates ${\emph{SWAP}, \cswap, ... , \ckswap}$ can generate a gate that is not in set $G$. It is not hard to see that if gates $g_1$ and $g_2$ have parity vectors $p_1$ and $p_2$ correspondingly, then the gate generated by cascading $g_2$ after $g_1$ would have a parity vector of $p_1 + p_2$. In fact any gate generated by cascading any number of $g_1$ and $g_2$ in any order would have a parity vector that is a linear combination of $p_1$ and $p_2$.

We denote by $p_{si}$ the parity vector of \ciswap. From our analysis above we see that $p_{s2}$ starts with three 0's and a 1; it ends with a 1 and 0. Assume that there is a way to generate \emph{CCSWAP} by cascading \emph{SWAP} and \emph{CSWAP}, then there exists two integers $a$ and $b$ such that $$p_{s2} = a\cdot p_{s0} + b\cdot p_{s1}$$ The first few elements of the right-hand side are $0, a \Mod 2, (am + b) \Mod 2, ...$, while the first few elements of the left-hand side are $0, 0, 0, 1, ...$. Thus we have $$a \Mod 2 = 0$$ and $$(am+b) \Mod 2 = 0$$ These two equations give us $a \Mod 2 = 0$ and $b \Mod 2 = 0$. That is to say, if there exists a way to generate \emph{CCSWAP} with \emph{SWAP} and \cswap, then there must be an even number of each. However, this also means that $a\cdot p_{s0} + b \cdot p_{s1}$ is a 0 vector. This is a contradiction since the fourth element of $p_{s2}$ has to be a 1.

We have just proven that cascades of \emph{SWAP} and \cswap cannot generate \emph{CCSWAP}, no matter how large $m$ is. In fact, we can easily generalize the result to the following: Given all the \ciswap with $i$ up to $k-1$, we cannot generate \ckswap through cascading, no matter how many borrowed bits are provided.

\end{proof}

\chapter{Conclusions and Further Work}

There are many ways to evaluate how good a certain gate construction is. The main problem we are concerned with in this thesis is the number of ancilla bits used. Other metrics include the gate complexity, circuit depth, and others.

We introduced the concept of borrowed bits, which allows us to consider a weaker but also cheaper version of ancilla bits. In terms of the number of extra input lines used, we arrived at three optimal constructions. We invented a single gate that can be used to generate both any $n$-bit reversible gate using only one borrowed bit and any $n$-bit even permutation reversible gate using no extra input line. We also designed a construction for any $n$-bit conservative reversible gate using the Fredkin gate with only 1 ancilla bit. These results were proved to have reached the minimum number of necessary extra input lines.

In terms of other metrics, these constructions are not optimal. For example, the gate complexity (asymptotic number of basic gates used) for generating \cnnot using Toffoli and $O(\log n)$ ancilla bits can be polynomial. However, in our construction using the variated Toffoli and 1 borrowed bit, we need an exponential gate complexity due to the fact that we construct \cinot from \ciinot, rather than construct it using a binary recursion.

An interesting question to ask is whether there exists a construction using the same number of extra input lines, while still achieving polynomial gate complexity for \emph{C\textsuperscript{n}NOT}; or is it possible that there exists a trade-off relationship between the number of ancilla bits used and the best gate complexity that can be achieved? If so, then what is the minimum number of ancilla bits required while achieving a low enough gate complexity, for example, asymptotically the same as the lower bound predicted by Shannon's counting argument? These questions remain to be answered.
\appendix
\chapter{Small Gate Cannot Be Universal Without Ancilla Bits}

For completeness, we prove the following statement which has already been proved by Toffoli \cite{toffoli80}: No reversible gate with a constant number of input bits can be universal without extra input lines.

\begin{proof}
Assume that there exists a $k$-bit reversible gate $g$ that is universal without extra input lines. We decompose $g$ into transpositions $\tau_1 \tau_2 ... \tau_j$. 

Consider an $n$-bit gate where $n>k$. Then for an input with $n$ bits, $g$ is equivalent to an $n$-bit gate that operates on $k$ bits and leaves the other $n-k$ bit unchanged. Thus, each $\tau_i$ corresponds to a transposition in this $n$-bit version of $g$ that has the same effect on the $k$ effective bits, but leave the other $n-k$ bits unchanged. Since the unchanged $n-k$ bits can have $2^{n-k}$ possibilities, the total number of transpositions in this $n$-bit version of $g$ must be a multiple of $2^{n-k}$, which is an even number since $n>k$. Thus $g$ as an $n$-bit gate is an even permutation. 

Any cascade of even permutations is still an even permutation, meaning that no odd $n$-bit gate can be generated by cascading $g$. Thus $g$ must not be universal.
 
\end{proof}

\clearpage
\newpage

\chapter{Decomposing Permutation into Basic Operations}

We prove the following lemma: Any permutation $\pi$ on the set $N = \{0, 1, 2, ... , 2^n-1\}$ can be decomposed into the following two basic operations:
\begin{enumerate}
\item[(1)] $t_1$: swapping 0 and 1 
\item[(2)] $t_2$: shifting $k$ to $(k+1) \Mod 2^n$, $\forall k \in N$
\end{enumerate}

\begin{proof}
Any permutation can be decomposed into a sequence of transpositions. For transposition $(i, j)$, we can further decompose it into adjacent swaps: 

\begin{itemize}
\item Swap $(i, i+1)$.
\item Swap $(i, i+2)$.
\item ...
\item Swap $(i, j)$.
\item Swap $(j, j-1)$.
\item Swap $(j, j-2)$.
\item ...
\item Swap $(j, i+1)$.
\end{itemize}

Thus it suffices to find a construction for any adjacent swap $(i, i+1)$ using $t_1$ and $t_2$.

\begin{itemize}
\item Apply $t_2$ for $2^n-i$ times. This shifts everything to the right for $2^n-i$ times, placing $(i, i+1)$ at $(0, 1)$.
\item Apply $t_1$. This swaps $(0, 1)$, which corresponds to $(i, i+1)$ originally.
\item Apply $t_2$ for $i$ times. This shifts everything back to their original position, except that $(i, i+1)$ are now swapped.
\end{itemize}

\end{proof}

We can also count the total number of $t_1$ and $t_2$ used in this procedure. For each adjacent swap, $t_1$ is applied once, and $t_2$ is applied $2^n$ times.

Each transposition is always decomposed into an odd number of adjacent swaps. Thus for each transposition, $t_1$ is applied an odd number of times, and $t_2$ is applied an even number of times.

If the permutation is even, then it can be decomposed into an even number of transpositions. Thus the total number of $t_1$ for this even permutation is even, and so is the total number of $t_2$. Changing $t_1$ to $t'_1$ does not change this fact.
\begin{singlespace}
\bibliography{main}
\bibliographystyle{abbrv}

\end{singlespace}

\end{document}